\newtheorem{prop}{Proposition}
\newtheorem{theo}{Theorem}
\begin{document}

\title{Algorithmic quantum simulation of memory effects}

\date{\today}

\author{U. Alvarez-Rodriguez} \email{unaialvarezr@gmail.com}
\affiliation{Department of Physical Chemistry, University of the Basque Country UPV/EHU, Apartado 644, 48080 Bilbao, Spain} 
\author{R. Di Candia}
\affiliation{Department of Physical Chemistry, University of the Basque Country UPV/EHU, Apartado 644, 48080 Bilbao, Spain}
\affiliation{Dahlem Center for Complex Quantum Systems, Freie Universit\"at Berlin, 14195 Berlin, Germany}
\author{J. Casanova}
\affiliation{Institut f\"{u}r Theoretische Physik, Albert-Einstein-Allee 11, Universit\"{a}t Ulm, D-89069 Ulm, Germany}
\author{M. Sanz}\email{mikel.sanz@ehu.eus}
\affiliation{Department of Physical Chemistry, University of the Basque Country UPV/EHU, Apartado 644, 48080 Bilbao, Spain}
\author{E. Solano}
\affiliation{Department of Physical Chemistry, University of the Basque Country UPV/EHU, Apartado 644, 48080 Bilbao, Spain}
\affiliation{IKERBASQUE, Basque Foundation for Science, Maria Diaz de Haro 3, 48013 Bilbao, Spain}

\begin{abstract}
We propose a method for the algorithmic quantum simulation of memory effects described by integrodifferential evolution equations. It consists in the systematic use of perturbation theory techniques and a Markovian quantum simulator. Our method aims to efficiently simulate both completely positive and nonpositive dynamics without the requirement of engineering non-Markovian environments. Finally, we find that small error bounds can be reached with polynomially scaling resources, evaluated as the time required for the simulation.
\end{abstract}

\maketitle

\textit{Introduction.} Fundamental interactions in nature are described by mathematical models that frequently overcome our analytical and numerical capacities. This problem is especially challenging in the quantum realm, due to the exponential growth of the Hilbert space with the number of particles involved. Richard Feynman proposed~\cite{Feynman82} that the desired calculations may be experimentally realized by codifying the model of interest into the degrees of freedom of another more controllable quantum system. Along these lines, in the last decade, this approach has been employed to simulate the dynamics of many-body quantum systems. A machine performing this task is called quantum simulator, and it has been studied with increasing interest, theoretically and experimentally, in controlled quantum systems~\cite{trotzky, fm}. It is expected that quantum simulators will solve relevant problems unreachable for classical computers. Among them, we could mention complex spin, bosonic, and fermionic many-body systems~\cite{ss, fm}, entanglement dynamics~\cite{emb, embexp, embi}, and fluid dynamics~\cite{flu}, among others.

In quantum mechanics, realistic situations in which the quantum system is coupled to an environment are modeled in the framework of open quantum systems. In this description, an effective evolution equation for the system of interest is obtained by disregarding the environmental degrees of freedom~\cite{brepe}. The resulting dynamics can be classified as Markovian or non-Markovian \cite{rhp,blpv,mnm,car,dino,nmframe}. In the former, the time evolution depends solely on the current state of the system, and there are several results concerning its quantum simulation~\cite{bac,swe,rob, eis}. On the contrary, the non-Markovian evolution depends on the history of the system, and it is more challenging to treat both analytically and numerically~\cite{eis}. In this sense, despite some recent results~\cite{chiu,div,qyr,xian,col,sil,ros,petr,jiji,mli,mef}, including a work on the sufficient conditions for a completely positive and trace preserving (CPTP) non-Markovian dynamics \cite{suff}, a general non-Markovian quantum simulator has not been fully developed yet. A paradigmatic feature of non-Markovian dynamics is the existence of quantum memory effects as an extension of the classical history-dependent dynamics to the quantum domain. Moreover, a number of key applications in the quantum domain can be envisioned, such as quantum machine learning ~\cite{qml1,qml2}, neuromorphic quantum computing ~\cite{qmem,cmem} and quantum artificial life~\cite{qalife,bioc}. These can be implemented by mirroring the already existing results in memcomputing devices~\cite{mem}, intelligent materials~\cite{mm} and population dynamics~\cite{lv}. Therefore, the simulation of quantum memory effects would be a significant step forward to understand open quantum systems and, consequently, to employ them in the development of the aforementioned research fields.  

In this Rapid Communication, we provide an efficient and general framework for an algorithmic quantum simulation~\cite{aqst} of memory effects modeled by integrodifferential evolution equations. The protocol algorithmically combines a Markovian quantum simulator with perturbation theory techniques in order to retrieve the time evolution of an arbitrary initial state. Our method does not require the engineering of any additional environment, avoiding the challenging task of developing first-principle non-Markovian quantum simulators. Moreover, the protocol works even when the evolution does not correspond to a CPTP map, which is the case of most of time-delayed Lindblad master equations. Indeed, although the CPTP character is not guaranteed, our approach circumvents this issue by splitting the simulation into two CPTP parts. Finally, we prove polynomial scaling error bounds for the proposed method.

\textit{Integrodifferential equations with memory.} The model describing the memory effects we aim to simulate is based on the integrodifferential equation
\begin{equation}\label{nonmark}
\partial_t \rho(t)= \int_0^tds\, K(t,s)\mathcal{L}\,\rho(s) .
\end{equation}
Here, $K(t,s)$ is a memory Kernel modeling how the evolution of the state at a certain time is affected by its history, and $\mathcal{L}$ is a general time-independent Lindblad operator. Notice that $K(t,s)=2\delta (t-s)$ corresponds to the standard Markovian master equation written in the Lindblad form. As noticed, for instance, in Refs.~\cite{eis,qyr}, it is not conceivable to simulate a general non-Markovian dynamics efficiently. The reason is that one could then imagine simulating a highly inefficient calculation in the environment, retrieving this information afterwards into the system due to the non-Markovian information backflow, in an efficient manner. However, Eq.~\eqref{nonmark} includes in the kernel $K(t,s)$ the non-Markovian aspects of the evolution, which gives only an effective description of the environment contribution.

In order to simulate Eq.~\eqref{nonmark}, we use as a tool the quantum simulation of the equation
\begin{equation}\label{semimark}
\partial_t \rho(t)=\int_0^tds\, H(t,s)\left[\mathcal{E}-\mathcal{I}\right]\rho(s),
\end{equation}
where $H(t,s)$ is a memory kernel, $\mathcal{E}$ is a general CPTP map and $\mathcal{I}$ is the identity map. Equation~\eqref{semimark} describes the dynamics of a {\it semi-Markovian} process \cite{sem}. It is noteworthy to mention that while Eqs.~\eqref{nonmark} and \eqref{semimark} preserve the trace of the density matrix, they do not generally preserve positivity. However, sufficient conditions for Eq.~\eqref{semimark} to determine a CPTP map have been studied when $H(t,s)= H(t-s)$. Indeed, if the Laplace transform of the memory kernel $H(\tau)$ satisfies the relation $\tilde{H}(u)=\frac{u \tilde{w} (u)}{1-\tilde{w}(u)}$ for some waiting distribution $w(t)$, then Eq.~\eqref{semimark} corresponds to a CPTP process~\cite{bou}. Moreover, if this condition is fulfilled, then the solution of Eq.~\eqref{semimark} can be written as $\rho(t)=\sum_{i=0}^\infty p_i(t)\mathcal{E}^i\rho(0)$, where $0\leq p_i(t)\leq 1$~\cite{bou}. In this case, by truncating the series, we can simulate Eq.~\eqref{semimark} assuming that an efficient quantum simulator of $\mathcal{E}$ and its powers is available. In the following, we will consider processes $\mathcal{E}$ corresponding to Markovian evolutions, whose efficient quantum simulator has been already designed, e.g., $k$-local Lindblad equations~\cite{eis}. We will show how to simulate a general kernel $H(t,s)$, including the case in which Eq.~\eqref{semimark} does not correspond to a CPTP process. Finally, we illustrate how to employ this result to simulate Eq.~\eqref{nonmark}.

\textit{Algorithmic quantum simulator.} Let us consider the Volterra version of Eq.~\eqref{semimark},
\begin{equation}\label{ori}
\rho(t)=\rho(0)+\int^{t}_{0} ds \, h(t,s) \left[\mathcal{E}-\mathcal{I}\right] \rho (s),
\end{equation}
where $h(t,s)\equiv \int_s^td\tau\, H(\tau, s)$. We assume that $H(t,s)\geq 0$ and $h(t,s)\leq c$, for a given constant $c$, for all $t\geq s \geq 0$. Moreover, we quantify the results in terms of the trace norm for matrices, defined as the sum of their singular values $\| \sigma\|_1\equiv\sum_i \sigma_i$, and the respective induced superoperator norm $\left\| \mathcal{A}\right\|\equiv \max_{ \sigma}\frac{\left \| \mathcal{A} \sigma\right\|_1}{\| \sigma\|_1}$. Then, Eq. ~\eqref{ori} can be solved iteratively, via the series  $\rho(t)=\sum^{\infty}_{i=0} \rho_i(t)$, where 
\begin{equation}
\rho_0 = \rho(0), \quad \rho_{i \geq 1} = \int^{t}_{0}ds \, h(t,s) \left[\mathcal{E}-\mathcal{I}\right] \rho_{i-1}(s) .
\label{cpe}
\end{equation} 
This expansion can be truncated at order $n$, $\tilde{\rho}_n(t)=\sum_{i=0}^n \rho_i(t)$, with a small truncation error given by the following estimation.
\begin{prop}[Truncation error] \label{trunc}
$\| \rho(t) - \tilde \rho_M (t)\|_1 \leq \varepsilon$ provided that $M\geq at +\log1/\varepsilon-1$, with $a=(e+1)c\left\| \mathcal{E}-\mathcal{I}\right\|$.
\end{prop}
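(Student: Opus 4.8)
The plan is to control the truncation error by the tail of the iterative series and then estimate that tail term by term. First I would use $\rho(t)=\sum_{i=0}^\infty\rho_i(t)$ together with the triangle inequality for the trace norm to write $\|\rho(t)-\tilde\rho_M(t)\|_1=\bigl\|\sum_{i=M+1}^\infty\rho_i(t)\bigr\|_1\le\sum_{i=M+1}^\infty\|\rho_i(t)\|_1$, reducing the problem to bounding the norm of each iterate $\rho_i(t)$.

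Next I would prove the per-order estimate $\|\rho_i(t)\|_1\le (bt)^i/i!$, with $b\equiv c\|\mathcal{E}-\mathcal{I}\|$, by induction on $i$. The base case follows from $\|\rho_0\|_1=\|\rho(0)\|_1=1$, since $\rho(0)$ is a density matrix and its trace norm equals its trace. For the inductive step I would insert the recursion in Eq.~\eqref{cpe}, bound the kernel using the hypotheses $H(t,s)\ge 0$ and $h(t,s)\le c$, and invoke submultiplicativity of the induced superoperator norm, $\|[\mathcal{E}-\mathcal{I}]\rho_{i-1}(s)\|_1\le\|\mathcal{E}-\mathcal{I}\|\,\|\rho_{i-1}(s)\|_1$. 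This gives $\|\rho_i(t)\|_1\le b\int_0^t\|\rho_{i-1}(s)\|_1\,ds\le b\int_0^t (bs)^{i-1}/(i-1)!\,ds=(bt)^i/i!$, closing the induction and bounding the error by the tail of the exponential series $\sum_{i=M+1}^\infty (bt)^i/i!$.

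Finally I would convert this tail into the stated threshold on $M$. The central estimate is a per-term bound $(bt)^i/i!\le e^{(e+1)bt-i}$ valid once $i\ge e\,bt$: taking logarithms and using Stirling in the form $\ln i!\ge i\ln i-i$, this reduces to the elementary inequality $u(2-\ln u)\le e+1$ for $u=i/(bt)\ge e$, which holds because the left-hand side is decreasing past $u=e$, where it already equals $e$. Summing the resulting geometric series from $i=M+1$—legitimate since $M+1\ge at=(e+1)bt>e\,bt$—and requiring the outcome to be at most $\varepsilon$ yields $M\ge (e+1)bt+\log(1/\varepsilon)-1=at+\log(1/\varepsilon)-1$. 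I expect this last step to be the crux, since it is here that the constant $a=(e+1)c\|\mathcal{E}-\mathcal{I}\|$ is generated, the summand $e$ reflecting the peak magnitude $\sim e^{bt}$ of the series terms and the additional $e\,bt$ marking where they begin to decay geometrically at rate $e^{-1}$. The delicate point will be the careful bookkeeping of the Stirling estimate and of the geometric prefactor needed to land exactly on the advertised additive constant $-1$ rather than a slightly larger one.
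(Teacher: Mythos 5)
Your first two steps are sound and arrive at the same quantity the paper does, namely the exponential tail $\sum_{i=M+1}^{\infty}(cyt)^i/i!$ with $y=\|\mathcal{E}-\mathcal{I}\|$; you get there by bounding each iterate, $\|\rho_i(t)\|_1\le (cyt)^i/i!$, by induction on Eq.~\eqref{cpe}, whereas the paper bounds the cumulative error $\|\rho(t)-\tilde\rho_n(t)\|_1$ recursively and invokes Gr\"onwall's inequality to control $\|\rho(s)\|_1$ in the base case. Your route is slightly more elementary (no Gr\"onwall needed) but leans on the identity $\rho(t)=\sum_i\rho_i(t)$, which the paper asserts in the main text, so this is a legitimate alternative up to that point.

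The genuine gap is in your last step. Summing your per-term bound $(bt)^i/i!\le e^{(e+1)bt-i}$ as a geometric series gives $\sum_{i=M+1}^{\infty}e^{(e+1)bt-i}=\frac{e}{e-1}\,e^{(e+1)bt-(M+1)}$, and forcing this below $\varepsilon$ requires $M\ge at+\log(1/\varepsilon)-1+\log\frac{e}{e-1}$, which is strictly worse than the advertised threshold by the additive constant $\log\frac{e}{e-1}\approx 0.459$. The slack you correctly identify in the Stirling step (the inequality $u(2-\ln u)\le e$ rather than $\le e+1$ for $u\ge e$) buys you only a factor $e^{-bt}$, which absorbs the prefactor $\frac{e}{e-1}$ only when $bt\ge\log\frac{e}{e-1}$; for small $bt$ your chain of inequalities does not close, and you explicitly defer this point rather than resolve it. The paper avoids the geometric series altogether: it bounds the whole tail by the Lagrange remainder of the exponential series, $\sum_{i=M+1}^{\infty}x^i/i!\le e^{x}\,\frac{x^{M+1}}{(M+1)!}$, then applies Stirling only to $(M+1)!$ and uses $\left(1+\frac{a}{b}\right)^b\le e^{a}$ to obtain $e^{x}\left(\frac{ex}{M+1}\right)^{M+1}\le e^{(e+1)x-(M+1)}\le\varepsilon$ exactly when $M+1\ge(e+1)x+\log(1/\varepsilon)$, with no stray prefactor. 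Replacing your geometric-series summation by this remainder bound repairs the argument and lands on the stated constant.
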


The proof of Proposition 1 is provided in Appendix~A. This truncation allows us to write the approximated solution of Eq.~\eqref{semimark} by a finite sum, with a number of terms growing linearly with the simulated time. Indeed, we have that
\begin{equation}\label{lin} 
\tilde \rho_n (t) = \sum_{i=0}^n d_i (t) \left[\mathcal{E}-\mathcal{I}\right]^i\rho(0),
\end{equation}
with the corresponding parameter values $d_0 (t) = 1$ and $d_{i\geq 1}(t)=\int_0^{s_0\equiv t}\cdots\int_0^{s_{i-1}}ds_1\cdots ds_{i}\,h(t,s_1)\cdots h(s_{i-1},s_i)$. This truncated sum can be rewritten as $\tilde \rho_n(t)=\sum_{i=0}^n c_i(t)\mathcal{E}^i\rho(0)$, with $c_i(t)=\sum_{k=i}^n {k \choose i}(-1)^{k-i}d_k(t)$. Proposition~\ref{trunc} tells us that we can directly simulate a semi-Markovian dynamics by just implementing powers of the process $\mathcal{E}$, and numerically integrating the memory kernel. As we need a number of terms which increases linearly with the simulated time, we have that this step is efficient if the implementation of the $\mathcal{E}$ is efficient. Notice that, by construction, $\tilde \rho_n$ has trace $1$, but it is not necessarily a density matrix, since it can have negative eigenvalues. However, we can write $\tilde \rho_n(t)$ as a weighted sum of two density matrices and introduce the quantities $c^+_i(t)\equiv \max\{c_i(t),0\}$ and $c^-_i(t)\equiv\min\{c_i(t),0\}$. In consequence, we have that  
\begin{equation}\label{solsem}
\tilde \rho_n (t) = C^+_n(t) \rho^+_n(t) + C^-_n (t) \rho^-_n(t),
\end{equation}
where the parameter values $C^{\pm}_n(t)=\sum_{i=0}^nc^{\pm}_i(t)$ and $\rho^{\pm}_n(t)=\frac{1}{C^{\pm}_n(t)}\sum_{i=0}^n c^{\pm}_i(t)\mathcal{E}^i\rho(0)$, while $C_n^-(t)=1-C_n^+(t)$ holds due to trace preservation. Notice that $\rho^{\pm}_n(t)$ are two density matrices, as their trace is $1$ and they are, by construction, positive. Indeed, we have approximated the dynamics, denoted by $\Lambda(t)$ [$\rho(t)=\Lambda(t)\rho(0)$], corresponding to Eq.~\eqref{semimark}, as a weighted sum of two CPTP maps: $\Lambda(t)\simeq \Lambda_n (t) =  C^+_n(t) \Lambda^+_n(t)+C^-_n(t)\Lambda^-_n(t)$, with $\Lambda^{\pm}_n(t)=\frac{1}{C^{\pm}_n(t)}\sum_{i=0}^n c^{\pm}_i(t)\mathcal{E}^i$. The form of the resulting CPTP maps allows us to simulate Eq.~\eqref{semimark} by making use of a Markovian quantum simulator and numerical techniques. In fact, all $c_i(t)$, and thus also $c^{\pm}_i(t)$, can be classically computed, and the states $\rho^{\pm}_n(t)$ can be prepared assuming that the Markovian operations $\mathcal{E}^i$ ($0\leq i\leq n$) are available.
\begin{prop}[Simulation of semi-Markovian processes]\label{semim}
Let us consider the simulating dynamics $\Lambda_{sim} (t)=C^+_M(t) \tilde\Lambda^+_M(t)+C^-_M(t)\tilde \Lambda^-_M(t)$, where $\tilde \Lambda^{\pm}_M(t)=\frac{1}{C^{\pm}_M(t)}\sum_{i=0}^M c^{\pm}_i(t) \mathcal{\tilde E}^i$, $\mathcal{\tilde E}$ denotes an efficient quantum simulation of $\mathcal{E}$, and $M\geq at+\log 1/\tilde \varepsilon$. If $\| \mathcal{E}^i-\mathcal{\tilde E}^i\|\leq \delta$ requires a simulation time $\bar t= O\left({\rm poly}(i, 1/\delta)\right)$, then we can simulate the semi-Markovian process in Eq.~\eqref{semimark} within an error $\|\Lambda(t)-\Lambda_{sim}(t)\|_1\leq \tilde \varepsilon$ by using a simulation time $\tilde t~=~O\left( {\rm poly}(t, C^+_M(t)/\tilde \varepsilon)\right)$. 
\end{prop}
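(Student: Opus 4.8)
The plan is to bound the total error by a triangle inequality that separates the two approximations at play, namely truncating the Volterra series at order $M$ and replacing each ideal power $\mathcal{E}^i$ by its efficient simulation $\tilde{\mathcal{E}}^i$. Writing $\|\Lambda(t)-\Lambda_{sim}(t)\|_1\leq \|\Lambda(t)-\Lambda_M(t)\|_1+\|\Lambda_M(t)-\Lambda_{sim}(t)\|_1$, where $\Lambda_M(t)=C^+_M(t)\Lambda^+_M(t)+C^-_M(t)\Lambda^-_M(t)$ is the exact truncation defined in Eq.~\eqref{solsem}, I would allocate a budget of $\tilde\varepsilon/2$ to each term and show that the stated choices of $M$ and $\delta$ meet it.

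For the first term I would invoke Proposition~\ref{trunc}: since the hypothesis $M\geq at+\log 1/\tilde\varepsilon$ already exceeds $at+\log(2/\tilde\varepsilon)-1$, the truncation error is at most $\tilde\varepsilon/2$, and because the estimate of Proposition~\ref{trunc} is uniform in the initial state it yields the corresponding bound on the map difference. The second term is where the structure of the decomposition does the work. The crucial observation is that the positive/negative splitting recombines: because $c_i(t)=c^+_i(t)+c^-_i(t)$, one has $C^\pm_M(t)(\Lambda^\pm_M-\tilde\Lambda^\pm_M)=\sum_{i=0}^M c^\pm_i(t)(\mathcal{E}^i-\tilde{\mathcal{E}}^i)$, so that $\Lambda_M(t)-\Lambda_{sim}(t)=\sum_{i=0}^M c_i(t)(\mathcal{E}^i-\tilde{\mathcal{E}}^i)$ and the $C^+_M(t)$ prefactors cancel out of the difference. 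The triangle inequality together with $\|\mathcal{E}^i-\tilde{\mathcal{E}}^i\|\leq\delta$ then gives $\|\Lambda_M(t)-\Lambda_{sim}(t)\|_1\leq \delta\sum_{i=0}^M|c_i(t)|$. Since $\sum_i c_i(t)=1$ by trace preservation while $\sum_i|c_i(t)|=C^+_M(t)-C^-_M(t)=2C^+_M(t)-1$, this is at most $2\delta\,C^+_M(t)$, so requiring $\delta=O(\tilde\varepsilon/C^+_M(t))$ forces this contribution below $\tilde\varepsilon/2$.

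It then remains to convert these precision requirements into a simulation time. Preparing $\Lambda_{sim}(t)$ requires the powers $\tilde{\mathcal{E}}^i$ for $0\leq i\leq M$, each to precision $\delta$, so by hypothesis the total time is $\tilde t=\sum_{i=0}^M O(\mathrm{poly}(i,1/\delta))=O(M\,\mathrm{poly}(M,1/\delta))$. I would then substitute $M=O(t+\log 1/\tilde\varepsilon)$, which follows from $a$ being a constant, and $1/\delta=O(C^+_M(t)/\tilde\varepsilon)$, and use $C^+_M(t)\geq 1$ together with $\log(1/\tilde\varepsilon)\leq 1/\tilde\varepsilon\leq C^+_M(t)/\tilde\varepsilon$ to absorb the logarithmic term, concluding $\tilde t=O(\mathrm{poly}(t,C^+_M(t)/\tilde\varepsilon))$.

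The routine parts are the two triangle inequalities and the time bookkeeping; the genuinely useful step is the recombination identity, which shows that although the dynamics is physically realized as a weighted difference of two CPTP maps with potentially large weights, the approximation error is controlled by the bare coefficients $c_i(t)$ and scales only with $\sum_i|c_i(t)|=2C^+_M(t)-1$. The main caveat, which I would flag rather than attempt to remove, is that $C^+_M(t)$ may grow with $t$ when the dynamics is far from CPTP, so the claimed efficiency is genuinely conditional on the behaviour of $C^+_M(t)$; the statement is correctly phrased to carry this quantity explicitly inside the polynomial.
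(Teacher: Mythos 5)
Your proof is correct and follows essentially the same route as the paper: the same triangle-inequality split into truncation error (handled by Proposition~\ref{trunc} with $M\geq at+\log 1/\tilde\varepsilon$) and gate-simulation error, with the latter controlled by $\delta\sum_i|c_i(t)|=\delta\,(2C^+_M(t)-1)$. Your recombination identity $\Lambda_M-\Lambda_{sim}=\sum_i c_i(t)(\mathcal{E}^i-\tilde{\mathcal{E}}^i)$ is just a cleaner packaging of the paper's term-by-term bound on $C^\pm_M(t)\|\Lambda^\pm_M-\tilde\Lambda^\pm_M\|$, and your explicit bookkeeping of the total simulation time is, if anything, more careful than the original.
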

\begin{proof}
We have that $\| \Lambda(t) - \Lambda_{sim}(t)\| \leq \|\Lambda(t)- \Lambda_M(t)\| + \|  \Lambda_M(t)-\Lambda_{sim}(t)\|$. The first term is bounded by $\tilde \varepsilon/2$, as $M \geq at+\log 1/\tilde \varepsilon$. The second term can be bounded by $\| \Lambda_M(t)-\Lambda_{sim} (t)\|\leq C^+_M(t)\| \Lambda^+_M(t)-\tilde \Lambda^+_{M}(t)\| -C^-_M(t)\| \Lambda^-_M(t)- \tilde \Lambda^-_{M}(t)\|$. We have that $\|\Lambda^{\pm}_M(t)-\tilde \Lambda^{\pm}_M(t) \|\leq \tilde \varepsilon/4|C^\pm_M(t)|$, assuming $\| \mathcal{E}^i-\mathcal{\tilde E}^i\|\leq \tilde \varepsilon/4|C^\pm_M(t)|$. This requires a simulation time $\bar t= O({\rm poly}(t, C^+_M(t)/\tilde \varepsilon)$, where we have used that $C^-_M(t)=1-C^+_M(t)$.
\end{proof} 
Proposition~\ref{semim} allows us to compute approximately the evolution of expectation values of observables under the dynamics of Eq.~\eqref{semimark}. It is noteworthy to mention that our method does not require the engineering of any bath corresponding to a semi-Markovian dynamics. Instead, we have written the formal solution of Eq.~\eqref{semimark}, and exploit the availability of a Markovian quantum simulator generating $\mathcal{E}$ and its powers. This is possible due to the fast convergence of the exponential series, which limits the number of terms to be classically computed. Moreover, the truncation provided in Proposition~\ref{trunc} implies also that an efficient Markovian simulation is sufficient to approximatively generate the solution of Eq.~\eqref{semimark}.
 
While in the CPTP semi-Markovian case we can directly sample from the probability distribution of a given observable, since we are directly implementing the solution, for more general non-Markovian equations we only have access to expectation values, as this time the process is split into two parts. A consequent question is whether we can compute interesting quantities beyond mere observables with our algorithmic quantum simulator.  In the following, we study the example of the two-time correlation function of unitary operators, i.e. $D^{\Lambda(t)}_{\rho,U}={\rm Tr}[U(t)U(0) \rho]$. In the last expression, $U(t)\equiv \Lambda^*(t) U$, where  $\Lambda^*(t)$ is the dual of $\Lambda(t)$, defined as ${\rm Tr}[A(\Lambda (t)\cdot\sigma)]\equiv {\rm Tr}[(\Lambda^*(t)\cdot A)\sigma]$ for arbitrary $A$ and $\sigma$. First, let us notice that $D_{\rho,U}^{\Lambda(t)}\simeq C^+_n(t) D^{\Lambda^+_n(t)}_{\rho,U}+C^-_n(t)D^{\Lambda^-_n(t)}_{\rho,U}$ for a sufficiently large $n$. Each resulting term can be computed with an extension to unitary dynamics of the protocol for the two-time correlation function proposed in Ref.~\cite{Julen}. Indeed, we add a two-dimensional ancilla and initialize the joint system in the state $\tilde \rho = \frac{1}{2}(|0\rangle+|1\rangle)(\langle 0| + \langle 1|)\otimes \rho$. First, we implement a controlled operation $U_c=|0\rangle\langle 0| \otimes U + |1\rangle\langle 1|\otimes \mathbb{1}$, then the evolution $\Lambda^{\pm}_n (t)$ on the original system, and finally $U_c$ again. In the end, $D^{\Lambda^{\pm}_n(t)}_{\rho,U}$ is retrieved by measuring the operator $\langle\sigma_x\rangle+i\langle\sigma_y\rangle$ in the ancilla. Notice that this protocol shows the same efficiency as the one in Proposition~\ref{semim}. Moreover, the method can be straightforwardly extended to multi-time correlation functions of unitary operators by iterating the aforementioned steps. Lastly, the multi-time correlation functions of observables $O$ can be computed by decomposing it into $O= U_{a}+ \gamma U_b$, with $\gamma \in \mathbb{R}$ and $U_{a,b}$ unitary matrices (see Appendix~C and Ref.~\onlinecite{flu}).  

Now, we are ready to show how to use the quantum simulation of Eq.~\eqref{semimark} to simulate Eq.~\eqref{nonmark}. Let us consider $\mathcal{E}=e^{\lambda\mathcal{L}}$, where $\lambda\in \mathbb{R}^+$ is a control parameter and $\mathcal{L}$ an arbitrary Lindblad operator, as in Eq.~\eqref{nonmark}. In the following, we prove that the solution of Eq.~\eqref{semimark}, describing a semi-Markovian process, approximates the solution of the memory process in Eq.~\eqref{nonmark} provided that $\lambda$ is small. 

\begin{figure}[h!]
\includegraphics[width=0.5\textwidth]{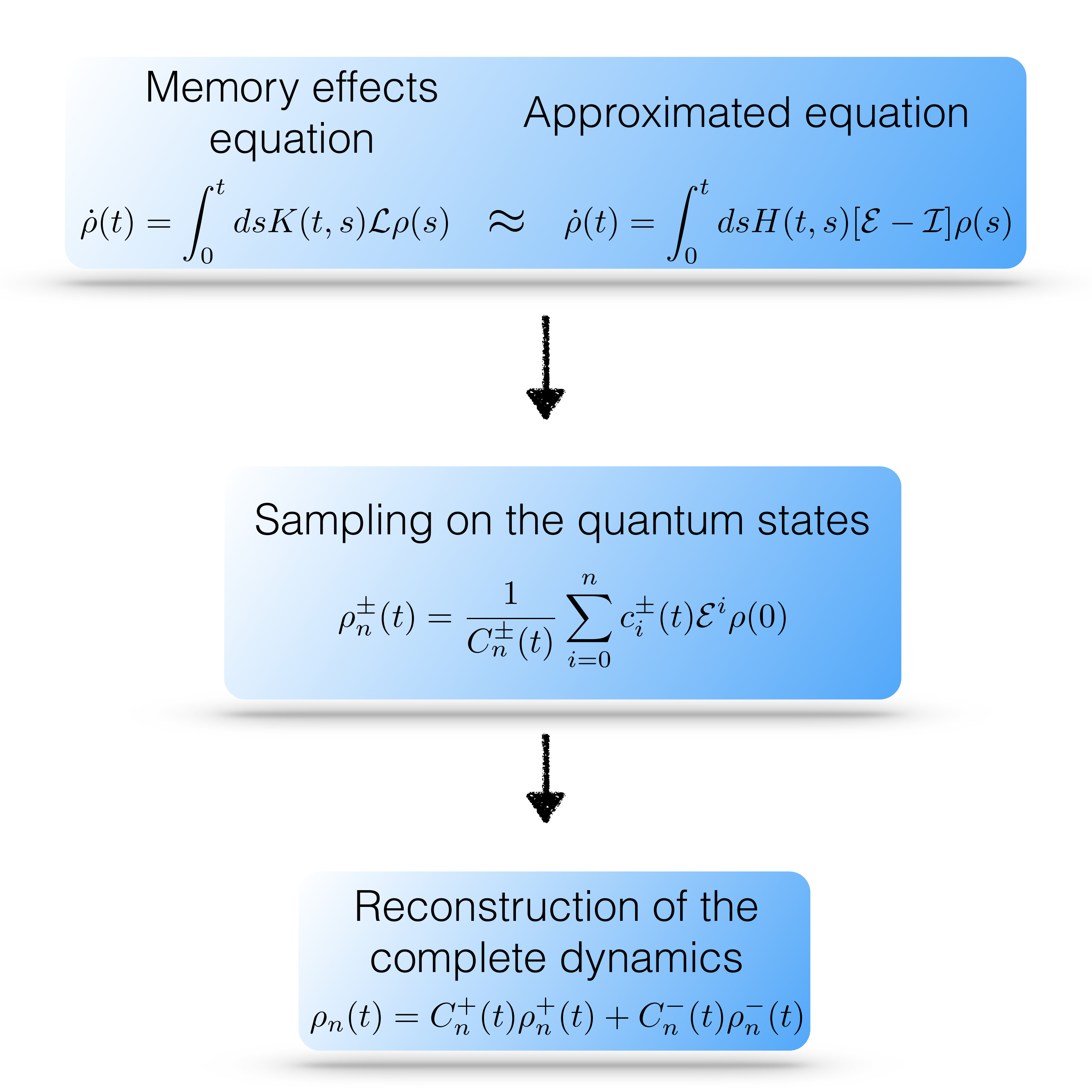}
\caption{ Scheme of our algorithmic quantum simulator. We approximate the equation underlining the memory effects with a semi-Markovian equation. We then split the solution of the semi-Markovian process into two CPTP parts, implementing each part separately. This process is accompanied by the integration of products of the memory kernel in a number which increases linearly with the simulated time.}  
\label{aqs}
\end{figure}

\begin{prop}[Simulation of memory effects]\label{prop3}
Let $\rho_1(t)$ and $\rho_2(t)$ be the solutions of Eq.~\eqref{nonmark} and Eq.~\eqref{semimark} respectively, with $\mathcal{E}\equiv \mathcal{E}_\lambda=e^{\lambda \mathcal{L}}$ ($\lambda\in \mathbb{R}^+$), $H(t,s)=K(t,s)/\lambda$  with $\int_s^t d\tau\, K(\tau,s)\leq c$, and $\rho_1(0)=\rho_2(0)$. Then, $\| \rho_1(t)-\rho_2(t)\|_1\leq \varepsilon$ holds if $\lambda \leq \frac{e^{-(2+o(\varepsilon))c\|\mathcal{L}\| t}}{c\left\| \mathcal{L}\right\|^2t}\varepsilon$, when $c\|\mathcal{L}\| t > 1/e$, and  if $\lambda \leq \log\left(\frac{1}{c \left\| \mathcal{L}\right\| t}\right)\frac{\varepsilon}{\|\mathcal{L}\|}$, when $c\|\mathcal{L}\| t \leq 1/e$, provided that $\varepsilon \le 1/2$. \end{prop}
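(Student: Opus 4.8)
The plan is to realize both solutions as Neumann (Dyson) series in the same memory kernel and to track only the operator part. Writing $\mathcal{E}_\lambda-\mathcal{I}=e^{\lambda\mathcal{L}}-\mathcal{I}=\lambda\,\mathcal{B}$ with $\mathcal{B}\equiv(e^{\lambda\mathcal{L}}-\mathcal{I})/\lambda$, and using $H=K/\lambda$ so that $h(t,s)=k(t,s)/\lambda$ with $k(t,s)\equiv\int_s^t d\tau\,K(\tau,s)\le c$, the iteration of Eq.~\eqref{cpe} for Eq.~\eqref{semimark} becomes $\rho_{i}(t)=\int_0^t ds\,k(t,s)\,\mathcal{B}\,\rho_{i-1}(s)$, which is identical to the iteration for Eq.~\eqref{nonmark} except that $\mathcal{B}$ replaces $\mathcal{L}$. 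Hence both series carry the same scalar coefficients $d_i(t)$, and I would write $\rho_2(t)=\sum_{i\ge0}d_i(t)\mathcal{B}^i\rho(0)$ and $\rho_1(t)=\sum_{i\ge0}d_i(t)\mathcal{L}^i\rho(0)$, so that, since $\rho_1(0)=\rho_2(0)$, the $i=0$ terms cancel and $\rho_1(t)-\rho_2(t)=\sum_{i\ge1}d_i(t)\,(\mathcal{L}^i-\mathcal{B}^i)\,\rho(0)$.

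Then I would estimate the three ingredients separately. First, $\mathcal{R}\equiv\mathcal{B}-\mathcal{L}=\sum_{j\ge2}\frac{\lambda^{j-1}}{j!}\mathcal{L}^j$ is the Taylor remainder of the exponential, so $\|\mathcal{R}\|\le\frac{\lambda}{2}\|\mathcal{L}\|^2 e^{\lambda\|\mathcal{L}\|}=O(\lambda)$, which is the small parameter that makes the whole difference vanish as $\lambda\to0$. Second, since $\mathcal{L}$ and $\mathcal{B}$ commute, a telescoping (binomial) expansion gives $\|\mathcal{L}^i-\mathcal{B}^i\|\le(\|\mathcal{L}\|+\|\mathcal{R}\|)^i-\|\mathcal{L}\|^i$. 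Third, each $d_i(t)$ is an integral of a product of $i$ kernel factors over the ordered simplex $0\le s_i\le\cdots\le s_1\le t$, so $d_i(t)\le (ct)^i/i!$. Summing the resulting exponential-type series collapses it to the closed form $\|\rho_1(t)-\rho_2(t)\|_1\le e^{ct\|\mathcal{L}\|}\big(e^{ct\|\mathcal{R}\|}-1\big)$. Equivalently, the same estimate follows by subtracting the two Volterra equations to get an integral equation for $\Delta(t)=\rho_1(t)-\rho_2(t)$ driven by $\mathcal{R}\rho_2$, bounding the trace-norm growth of the (generally non-positive) $\rho_2$ by $\|\rho_2(s)\|_1\le e^{cs\|\mathcal{B}\|}$, and applying a Grönwall inequality; this route makes transparent that the non-CPTP intermediate state contributes one exponential factor and the Grönwall accumulation a second one.

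With a closed-form bound in hand, the final step is to invert it for $\lambda$. Taking the Grönwall form $\|\Delta(t)\|_1\le ct\,\|\mathcal{R}\|\,e^{ct(\|\mathcal{L}\|+\|\mathcal{B}\|)}$ and using $\|\mathcal{B}\|=\|\mathcal{L}\|\,(1+O(\lambda\|\mathcal{L}\|))$ together with $\|\mathcal{R}\|=\tfrac12\lambda\|\mathcal{L}\|^2(1+O(\lambda\|\mathcal{L}\|))$, the bound takes the schematic form $\tfrac12\lambda\|\mathcal{L}\|^2\,ct\;e^{(2+o(1))c\|\mathcal{L}\|t}$; because at the threshold $\lambda=O(\varepsilon)$, the correction $O(\lambda\|\mathcal{L}\|)$ is itself $o(\varepsilon)$, which is exactly the origin of the $(2+o(\varepsilon))$ appearing in the exponent. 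Setting this $\le\varepsilon$ and solving for $\lambda$ yields the first regime $\lambda\le\frac{e^{-(2+o(\varepsilon))c\|\mathcal{L}\|t}}{c\|\mathcal{L}\|^2 t}\varepsilon$. When $c\|\mathcal{L}\|t\le1/e$ the exponential prefactor is no longer the dominant constraint; there the binding requirement is that $\lambda\|\mathcal{L}\|$ stay in the perturbative range controlling $\|\mathcal{R}\|$ and $e^{ct\|\mathcal{B}\|}$, and carefully solving the transcendental inequality in this weak-memory regime produces the milder, logarithmically enhanced threshold $\lambda\le\log\!\big(1/(c\|\mathcal{L}\|t)\big)\,\varepsilon/\|\mathcal{L}\|$.

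The main obstacle I anticipate is precisely this last inversion. The error bound is transcendental in $\lambda$, which enters simultaneously through the perturbation size $\|\mathcal{R}\|\sim\lambda\|\mathcal{L}\|^2$ and inside the exponential growth factor $e^{ct\|\mathcal{B}\|}$ coming from the unbounded trace-norm growth of the non-positive $\rho_2$. Disentangling these two appearances, justifying the self-consistent replacement of the $O(\lambda\|\mathcal{L}\|)$ corrections by $o(\varepsilon)$, and identifying the crossover at $c\|\mathcal{L}\|t=1/e$ that separates the exponentially small threshold from the logarithmic one are the delicate points; the series identification and the simplex and telescoping estimates above are, by contrast, routine.
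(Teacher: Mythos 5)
Your argument is correct, and your primary route is genuinely different from the paper's. The paper works directly with the difference of the two Volterra equations: it isolates the Taylor remainder $\mathcal{R}=(\mathcal{E}_\lambda-\mathcal{I})/\lambda-\mathcal{L}$ via the Lagrange bound, controls $\|\rho_2(s)\|_1$ with one application of Gr\"onwall's inequality, and then applies Gr\"onwall a second time to the resulting integral inequality, arriving at $\|\rho_1(t)-\rho_2(t)\|_1\le \frac{\lambda\|\mathcal{L}\|^2e^{\lambda\|\mathcal{L}\|}}{2\|\mathcal{E}_\lambda-\mathcal{I}\|/\lambda}\bigl(e^{ct\|\mathcal{E}_\lambda-\mathcal{I}\|/\lambda}-1\bigr)e^{ct\|\mathcal{L}\|}$ --- this is exactly your ``alternative'' Gr\"onwall route, with its two exponential factors. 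Your primary route (identical Neumann series coefficients $d_i(t)\le (ct)^i/i!$, plus the binomial estimate $\|\mathcal{B}^i-\mathcal{L}^i\|\le(\|\mathcal{L}\|+\|\mathcal{R}\|)^i-\|\mathcal{L}\|^i$, which is valid since $\mathcal{B}$ is a function of $\mathcal{L}$) is not merely ``equivalent'': summing it yields $e^{ct\|\mathcal{L}\|}\bigl(e^{ct\|\mathcal{R}\|}-1\bigr)$, which for small $\lambda$ carries only \emph{one} factor of $e^{ct\|\mathcal{L}\|}$ rather than two, so it is strictly tighter than the paper's bound and would in fact support a weaker requirement on $\lambda$ (exponent $\approx 1$ rather than $2+o(\varepsilon)$) in the first regime; of course a tighter bound still proves the stated sufficiency. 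The one place where you stop short is the final inversion, which you correctly flag as the delicate step but leave schematic: the paper resolves it explicitly by noting that the chosen $\lambda$ forces $\lambda\|\mathcal{L}\|<\varepsilon$ in the regime $c\|\mathcal{L}\|t>1/e$ (whence $\|\mathcal{E}_\lambda-\mathcal{I}\|/\lambda\le e^{\varepsilon}\|\mathcal{L}\|$, producing the $1+e^{\varepsilon}=2+O(\varepsilon)$ exponent), and in the regime $c\|\mathcal{L}\|t\le 1/e$ by showing $z\equiv ct\|\mathcal{E}_\lambda-\mathcal{I}\|/\lambda\le(c\|\mathcal{L}\|t)^{1-\varepsilon}<e^{-1/2}$ so that $(e^{z}-1)/z<e^{1/2}$, after which the claim reduces to the elementary inequality $\frac{e}{2}\log(1/x)\,x^{1-\varepsilon}\le 1$ for $x\le 1/e$, $\varepsilon\le 1/2$. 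These are routine computations consistent with your outline, so your proposal is sound; if you carry out the inversion on your tighter series bound you would obtain a slightly stronger version of the proposition.
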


The bounds of Proposition~\ref{prop3} are rigorously found in Appendix~B. The result of Proposition~\ref{prop3} provides the error bound for a general simulation of a complex environment described by Eq.~\eqref{nonmark}, and it is rather general as it holds for any $\mathcal{L}$. The algorithm consists in implementing the states defining the solution of the approximated semi-Markovian process, together with the numerical integration of the memory kernel, as schematically depicted in Fig.~\ref{aqs}. The method can be generalized to even more complicated dynamical equations. For instance, the case of higher-order derivatives, as
\begin{equation}\label{general}
\partial_t \rho (t)=\int^{t}_{0} \int^{s_1}_{0} ds_2\,ds_1\, K(s_1,s_2) \mathcal{L}\, \rho (s_2).
\end{equation}
The solution of Eq.~\eqref{general} can be approximated analogously to Eq.~\eqref{nonmark}, and Proposition~\ref{prop3} extended in order to find similar bounds. A further generalization consists in introducing additional terms, increasing the versatility of the proposed algorithmic quantum simulator. For instance, let us consider the equation
\begin{equation}\label{general1}
\partial_t \rho(t)=\sigma + \int^{t}_{0}ds\, K(t,s) \mathcal{L}\, \rho (s),
\end{equation}
where $\sigma$ can be an arbitrary matrix. Then, Eq.~\eqref{general1} can be simulated by approximating it with the equation $\partial_t \rho(t)=\sigma + \int^{t}_{0}ds\, K(t,s) \left[e^{\lambda \mathcal{L}} -\mathcal{I}\right]/\lambda\, \rho (s)$, which can be rewritten and simulated similarly to Eq.~\eqref{semimark}.

\textit{Conclusions.} We have developed a flexible and efficient quantum algorithm for the solution of integrodifferential evolution equations describing quantum memory effects, including the case of non-Markovian dynamics. The proposed algorithmic quantum simulation is useful for mimicking the effective action of complex environments. Alternative situations that our approach may cover include quantum feedback, quantum machine learning, and neuromorphic quantum computation. Lastly, the results in this Rapid Communication can be exploited for the classical simulation of memory effect equations. In fact, if the Markovian process used as a tool is decomposed efficiently by gates with a positive Wigner function, then expected values of observables can be estimated by using Monte Carlo techniques~\cite{qp1,qp2}.

{\it Acknowledgments.} The authors thank Ryan Sweke for useful discussions. The authors acknowledge support from Spanish MINECO/FEDER FIS2015-69983-P, UPV/EHU UFI 11/55 and a Postdoctoral fellowship, Basque Government Grant IT986-16 and BFI-2012-322, the Alexander von Humboldt Foundation and the AQuS EU project.


\appendix

\section{Proof of Proposition 1}
In this section, we provide in more detail the proof of Proposition~1, which gives us an upper bound for the truncation error. In order to quantify the error, we use the trace norm of a matrix, defined as the sum of the singular values of the matrix: $\|\sigma\|_1\equiv \sum_i\sigma_i$. The following recursion relation holds:
\begin{eqnarray}
\| \rho(t)-\tilde \rho_n(t)\|_1 \leq & \int^{t}_{0}ds\, h(t,s) \|\mathcal{E-I}\| \| \rho(t)-\tilde \rho_{n-1}(t)\|_1\nonumber \\ 
\leq & y c \int_0^t ds\, \|\rho(s)-\tilde \rho_{n-1}(s)\|_1 ,
\end{eqnarray}
where $\rho(t)$ is the ideal solution at time $t$, $\tilde \rho_n (t)$ is $n$th order truncation at time $t$, $h(t,s)\leq c$, and $y\equiv \left\|\mathcal{E}-\mathcal{I}\right\|$, in which the superoperator norm is induced by the trace norm, i.e. $\|\mathcal{A}\|\equiv \textrm{sup}_{\sigma}\frac{\|\mathcal{A}\sigma\|_1}{\|\sigma\|_1}$. The truncation error can be thus evaluated by induction, by considering the zeroth-order truncation error,
\begin{equation}\label{0th}
\|\rho(t)-\tilde \rho_0(t)\|_1\leq  y\int^{t}_{0}ds\, h(t,s) \|\rho(s)\|_1.
\end{equation}
A bound on $\|\rho(s)\|_1$ can be found by using a Gr\"{o}nwall's inequality.
\begin{theo}[Gr\"onwall's inequality~\cite{ineq}]\label{gro1} Let $u$ be a continuous function defined on $J=[\alpha,\beta]$ and let the function $g(t,s)$ be continuous and nonnegative on the triangle $\Delta:\alpha\leq s\leq t\leq \beta$ and nondecreasing in $t$ for each $s\in J$. Let $n(t)$ be a positive continuous and nondecreasing function for $t\in J$. If 
\begin{equation}
u(t) \le n(t) + \int^{t}_{\alpha} ds\, g(t,s) u(s),\qquad t\in J,
\end{equation}
then
\begin{equation}
u(t)\le n(t) e^{  \int^{t}_{\alpha}ds\, g(t,s)},\qquad t \in J.
\end{equation}
\end{theo}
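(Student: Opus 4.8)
The plan is to reduce the two-variable kernel $g(t,s)$ to the classical one-variable Gr\"onwall setting by exploiting its monotonicity in the first argument. The obstacle relative to the textbook statement is that $t$ appears simultaneously as the upper limit of integration and inside $g(t,s)$, so the usual integrating-factor argument cannot be applied directly: differentiating the right-hand side in $t$ would also differentiate the kernel.

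To circumvent this, I would fix an arbitrary endpoint $T\in J$ and work on $[\alpha,T]$. Since $g(t,s)$ is nondecreasing in $t$ and $n$ is nondecreasing, for every $t\le T$ the hypothesis gives
\begin{equation}
u(t)\le n(t)+\int_\alpha^t ds\, g(t,s)\,u(s)\le n(T)+\int_\alpha^t ds\, g(T,s)\,u(s).
\end{equation}
The key point is that $g(T,s)$ now depends on $s$ alone, so the problem has been frozen into the standard one-dimensional form for each fixed $T$ (the kernel values $g(T,s)$ are all well defined because $s\le t\le T$ keeps every argument inside the triangle $\Delta$).

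From here I would run the classical argument. Setting $\phi(t)\equiv n(T)+\int_\alpha^t ds\, g(T,s)\,u(s)$, one has $u(t)\le\phi(t)$, $\phi(\alpha)=n(T)$, and $\phi'(t)=g(T,t)\,u(t)\le g(T,t)\,\phi(t)$, where nonnegativity of $g$ enters in the last inequality. Because $n(T)>0$ keeps $\phi$ strictly positive, dividing yields $(\log\phi)'(t)\le g(T,t)$; integrating from $\alpha$ to $T$ gives $\phi(T)\le n(T)\,e^{\int_\alpha^T ds\, g(T,s)}$.

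Finally, combining $u(T)\le\phi(T)$ with this bound produces $u(T)\le n(T)\,e^{\int_\alpha^T ds\, g(T,s)}$, and since $T\in J$ was arbitrary, relabeling $T$ as $t$ yields the claim throughout $J$. The one genuinely delicate step is the freezing in the second paragraph; once the monotonicity in $t$ turns the kernel into a function of $s$ only, the remainder is the standard integrating-factor computation. The \emph{positivity} of $n$ is what legitimizes taking logarithms, and the \emph{nondecreasing} hypotheses on both $n$ and $g$ are exactly what is needed to justify replacing $n(t),g(t,s)$ by $n(T),g(T,s)$ in the bound.
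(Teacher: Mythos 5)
The paper does not actually prove Theorem~1: it is imported verbatim from the Ames--Pachpatte reference~\cite{ineq} and used as a black box in Appendices~A and~B, so there is no in-paper argument to compare against. Your strategy --- freezing the kernel and the majorant at an arbitrary endpoint $T$ using the monotonicity hypotheses, then running the classical integrating-factor computation with the one-variable kernel $g(T,\cdot)$ --- is the standard route to this two-variable generalization, and the differentiation, logarithm, and integration steps are carried out correctly.

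There is, however, one genuine gap. The freezing inequality $\int_\alpha^t ds\, g(t,s)u(s)\le \int_\alpha^t ds\, g(T,s)u(s)$ requires $\bigl(g(T,s)-g(t,s)\bigr)u(s)\ge 0$, so it uses the sign of $u$ and not merely the monotonicity of $g$; likewise your assertion that $\phi$ remains strictly positive (which legitimizes taking logarithms) silently assumes $u\ge 0$, since a sufficiently negative $u$ could drive $\phi$ through zero before $T$. The theorem as stated assumes only that $u$ is continuous. The gap is easy to close: set $u^+(t)\equiv\max\{u(t),0\}$ and note that $u(t)\le n(t)+\int_\alpha^t ds\, g(t,s)u^+(s)$ because $g\ge 0$ and $u\le u^+$, while the right-hand side is also $\ge n(t)>0$; hence $u^+$ satisfies the same integral inequality and is nonnegative, your argument applies verbatim to $u^+$, and $u\le u^+\le n\,e^{\int_\alpha^t ds\,g(t,s)}$ follows. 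In the paper's applications the issue is invisible because $u$ is always a trace norm and hence nonnegative, but as a proof of the theorem as stated you should either insert the $u^+$ reduction or declare the additional hypothesis $u\ge 0$.
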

One can prove from the Volterra equation that $\|\rho(t)\|_1\leq \|\rho(0)\|_1+y\int^{t}_{0}ds\, h(t,s)\|\rho(s)\|_1$. Theorem~\ref{gro1} implies that
\begin{equation}\label{b1}
\|\rho(t)\|_1 \le e^{y\int^{t}_{0}ds\, h(t,s)},
\end{equation}
where we have set $\|\rho(0)\|_1=1$. Here, we have assumed that $H(t,s)\geq0$, to satisfy the hypothesis on $h(t,s)=\int_s^td\tau\, H(\tau,s)$, in order to apply Theorem~\ref{gro1}. By plugging Eq.~\eqref{b1} into Eq.~\eqref{0th}, we find that
\begin{equation}\label{in2}
\|\rho(t)-\tilde \rho_0(t)\|_1\leq y \int_0^t ds\, h(t,s)e^{y\int_0^sd\tau\, h(s,\tau)}\leq e^{cyt}-1,
\end{equation}
where in the second inequality, we have used that $h(s,\tau)\leq h(t,\tau)$ for $s\leq t$, allowing us to perform the integration. In the third inequality we have made use of the bound on $h(t,s)$, $h(t,s)\leq c$.

We can now prove by induction that 
\begin{equation}\label{ind}
\|\rho(t)-\tilde \rho_M(t)\|_1\leq \sum_{i=M+1}^\infty \frac{(cyt)^i}{i!},
\end{equation}
for any natural $M$. The case $M=0$ is just the inequality found in Eq.~\eqref{in2}. Let us assume that Eq.~\eqref{ind} holds for $M=n-1$. Then,
\begin{align*}
\|\rho(t)-\tilde \rho_{n}(t)\|_1\leq & y c \int_0^tds\,\| \rho(s)-\tilde \rho_{n-1}(s)\|_1\\ 
\leq & y c \int_0^tds\, \sum_{i=n}^\infty \frac{(cys)^i}{i!}=\sum_{i=n+1}^\infty \frac{(cyt)^i}{i!},
\end{align*} 
which concludes the proof of Eq.~\eqref{ind}.

In the following, we will prove that $\sum_{i=M+1}^\infty x^i/i!\leq \varepsilon$ holds, provided that $M\geq (e+1)x +\log(1/\varepsilon)-1$. Indeed, we have that
\begin{eqnarray*}
&&\sum_{i=M+1}^\infty \frac{x^i}{i!} \leq e^x\frac{x^{M+1}}{(M+1)!} \leq e^x\left(\frac{ex}{M+1}\right)^{M+1} \\ 
&&=e^x\left(1+\frac{ex-(M+1)}{M+1}\right)^{M+1} \leq e^xe^{ex-(M+1)} \leq \varepsilon.
\end{eqnarray*}
In the first inequality, we have used the Lagrange error formula for the Taylor expansion of the exponential series. In the second inequality, we have used the Stirling inequality $n! \geq \left( \frac{n}{e}\right)^n$. In the third inequality, we have used that $ \left(1+\frac{a}{b}\right)^b \le e^a$. Finally, in the last inequality of Eq.~(9), we have used the lower bound on $M$. By applying the last result to $x=cyt$, we finish the proof of Proposition~1.

\section{Proof of Proposition $3$}

In Proposition 3, we estimate the error made when approximating the equation 
\begin{equation}\label{eq1}
\partial_t \rho(t) = \int_0^t ds\, K(t,s)\mathcal{L}\rho(s)
\end{equation}
by the equation which corresponds to the semi-Markovian process
\begin{equation}\label{eq2}
\partial_t \rho(t) = \int_0^t ds\, H(t,s)\left[\mathcal{E_\lambda}-\mathcal{I}\right]\rho(s),
\end{equation}
where $H(t,s)= K(t,s)/\lambda$ and $\mathcal{E_\lambda}=e^{\lambda\mathcal{L}}$, with the same initial condition for both equations. Let us denote by $\rho_1(t)$ and $\rho_2(t)$ the solutions to Eqs. \eqref{eq1} and \eqref{eq2}, respectively. Considering the corresponding Volterra equations, we can upper bound the distance between $\rho_1(t)$ and $\rho_2(t)$,
\begin{eqnarray}
&& \nonumber \|\rho_1(t)-\rho_2(t)\|_1 \\ && \nonumber = \left\| \int^{t}_{0}ds\, k(t,s) \left(\mathcal{L}\rho_1(s) - \frac{[e^{\lambda \mathcal{L}}-\mathcal{I}]}{\lambda} \rho_2(s)\right) \right\|_1 \\ && \nonumber = \left\| \int^{t}_{0}ds\, k(t,s) \left( \mathcal{L}\left(\rho_1(s)-\rho_2(s)\right) -\frac{1}{\lambda} \sum^{\infty}_{i=2} \frac{(\lambda \mathcal{L})^{i}}{i!} \rho_2 \right) \right\|_1 \nonumber \\ && \nonumber \leq \int^{t}_{0}ds\,  k(t,s) \|\mathcal{L}\| \| \rho_1(s)-\rho_2(s)\|_1 \\ &&+ \frac{\lambda^2 \|\mathcal{L}\|^2 e^{ \lambda \|\mathcal{L}\|}}{2}\int^{t}_{0}ds\, h(t,s) \|\rho_2(s)\|_1, \label{eq3}
\end{eqnarray}
where we have used the definitions $h(t,s)\equiv \int_s^td\tau\, H(\tau,s)$ and $k(t,s)\equiv \int_s^td\tau\, K(\tau,s)$. In Eq. \eqref{eq3}, we have used the triangle inequality and, then, the Lagrange bound for the Taylor series truncation on the last term, i.e., $\sum_{i=2}^\infty \frac{\lambda^i \| \mathcal{L}\| ^i}{i!}\leq \frac{\lambda^2 \|\mathcal{L}\|^2}{2}e^{\lambda \|\mathcal{L}\|}$. As in Proposition 1, we can now bound $\|\rho_2(s)\|_1$ by using the Gr\"onwall's inequality from Theorem~\ref{gro1}: $\|\rho_2(s)\|_1\leq e^{\int_0^sd\tau\, h(s,\tau)\left\|\mathcal{E}_\lambda-\mathcal{I}\right\|}$, with $\|\rho(0)\|_1=1$. At this point, we can bound the second term in Eq.~\eqref{eq3}, obtaining 
\begin{align}\label{eq4}
\|\rho_1(t)-\rho_2(t)\|\leq & \int^{t}_{0}ds\,  k(t,s) \|\mathcal{L}\| \| \rho_1(s)-\rho_2(s)\|_1 \nonumber \\
+ & \frac{\lambda^2 \|\mathcal{L}\|^2 e^{ \lambda \|\mathcal{L}\|}}{2\left\| \mathcal{E_\lambda}-\mathcal{I}\right\|}\left(e^{\int_0^tds\,h(t,s)\left\|\mathcal{E_\lambda}-\mathcal{I}\right\|}-1\right).
\end{align} 
Here, we have used that $h(s,\tau)\leq h(t, \tau)$, for $s\leq t$, and performed the integration. The second term in Eq.~\eqref{eq4} is positive and nondecreasing in time, so we can apply the Gr\"{o}nwall's inequality from Theorem~\ref{gro1},
\begin{align}\label{last}
\|\rho_1(t)-\rho_2(t)\|_1 \le \frac{\lambda^2 \|\mathcal{L}\|^2 e^{\lambda \|\mathcal{L}\|}}{2 \left\|\mathcal{E_\lambda}-\mathcal{I}\right\|} \left(e^{ \int^{t}_{0}ds\, h(t,s) \|\mathcal{E_\lambda}-\mathcal{I}\|} -1 \right) \times \nonumber \\
\times  e^{\int^{t}_{0}ds\, k(t,s) \|\mathcal{L}\|} \leq  \frac{\lambda \|\mathcal{L}\|^2 e^{\lambda \|\mathcal{L}\|}}{2 \left\|\mathcal{E_\lambda}-\mathcal{I}\right\|/\lambda} \left(e^{ ct \|\mathcal{E_\lambda}-\mathcal{I}\|/\lambda} -1 \right) e^{ct\|\mathcal{L}\|},
\end{align}
where we have set $h(t,s)=k(t,s)/\lambda$ and we have assumed $k(t,s)\leq c$. 

Finally, let us consider two parameter regimes:

(1) {\it First Regime}: For $c \| \mathcal{L}\| t\leq 1/e$, the expression in Eq.~\eqref{last} is bounded by $\varepsilon$, provided that $\lambda \leq \log\left(\frac{1}{c \left\|\mathcal{L}\right\|t}\right)\frac{\varepsilon}{\left\| \mathcal{L}\right\|}$,
\begin{align*}
\frac{\lambda \|\mathcal{L}\|^2 e^{\lambda \|\mathcal{L}\|}}{2} \frac{e^{ ct \|\mathcal{E_\lambda}-\mathcal{I}\|/\lambda} -1}{\left\|\mathcal{E_\lambda}-\mathcal{I}\right\|/\lambda} e^{ct\|\mathcal{L}\|} \leq \frac{e}{2}\lambda \left\| \mathcal{L}\right\| c \left\| \mathcal{L}\right\| t e^{\lambda \left\| \mathcal{L}\right\|} \\
\leq \frac{e}{2} \log\left(\frac{1}{c \left\| \mathcal{L}\right\| t}\right)\left(c \left\| \mathcal{L}\right\| t\right)^{1-\varepsilon}\varepsilon \leq \varepsilon , 
\end{align*}
where we have used in the first inequality that $z\equiv ct \|\mathcal{E_\lambda}-\mathcal{I}\|/\lambda\leq ct \left(e^{\lambda \left\| \mathcal{L}\right\|}-1\right) /\lambda\leq ct \left\| \mathcal{L}\right\| e^{\lambda\left\| \mathcal{L}\right\|}\leq\left( ct\left\| \mathcal{L}\right\| \right) ^{1-\varepsilon}<~e^{-1/2}$, in order to apply the inequality $(e^{z}-1)/z< e^{1/2}e^{e^{-1/2}}-1<e^{1/2}$ ($0\leq z<e^{-1/2}$), and the last inequality holds for $\varepsilon<1/2$.

(2) {\it Second Regime}: For $c\left\|\mathcal{L}\right\| t >1/e$, the expression in Eq.~\eqref{last} is bounded by $\varepsilon$, provided that $\lambda \leq \frac{e^{-(1+e^{\varepsilon})c\left\|\mathcal{L}\right\| t}}{c\left\|\mathcal{L}\right\|^2t }\varepsilon$. In fact, for this parameter choice, we have that $\lambda \left \| \mathcal{L}\right\| < \varepsilon$, which implies $\|\mathcal{E_\lambda}-\mathcal{I}\|/\lambda \leq \left(e^{\lambda \left\| \mathcal{L}\right\|}-1\right) /\lambda \leq e^{\varepsilon}\left\|\mathcal{L}\right\|$. Hence, the relation
\begin{align*}
\frac{\lambda \|\mathcal{L}\|^2 e^{\lambda \|\mathcal{L}\|}}{2} \frac{e^{ ct \|\mathcal{E_\lambda}-\mathcal{I}\|/\lambda} -1}{\left\|\mathcal{E_\lambda}-\mathcal{I}\right\|/\lambda} e^{ct\left\|\mathcal{L}\right\|} \leq \frac{\lambda \left(c\left\| \mathcal{L}\right\|^2t\right)  e^{\lambda \left\| \mathcal{L}\right\|}}{2}\times \\
\times e^{ct\left(\|\mathcal{E_\lambda}-\mathcal{I}\|/\lambda +\left\|\mathcal{L}\right\|\right)}\leq \frac{e^\varepsilon}{2}\varepsilon \leq \varepsilon
\end{align*}
holds. Here, we have used in the first inequality that $(e^{z}-1)/z<e^{z}$, applying it to $z\equiv ct \|\mathcal{E_\lambda}-\mathcal{I}\|/\lambda$, and the last inequality holds for $\varepsilon<1/2$.

\section{Observable decomposition in sum of unitary matrices}
Any observable $O$ can be decomposed as a sum of two unitary matrices $U_{a}$ and $U_b$, as $O=U_a + \gamma U_b$, with $\gamma > 0$ and $\|O\|\le 1+\gamma$ \cite{flu}. The first step is the diagonalization of $O$, $O=VDV^{\dag}$, and obtain the equations for $a_i$ and $b_i$, the eigenvalues of $U_a$ and $U_b$, as a function of $\gamma$ and $d_i$, the eigenvalues of $O$, as follows:
\begin{equation}
d_i=a_i + \gamma b_i, \qquad |a_i|=1, \qquad |b_i|=1.
\end{equation}
The eigenvalues are decomposed into real and imaginary parts,
\begin{eqnarray*}
&&\textrm{Re}(a_i)=\frac{d^2_i-\gamma_{-}}{2 d_i}, \qquad \textrm{Im}(a_i)=\frac{\sqrt{-d^4_i+2d^2_i\gamma_{+}-\gamma_{-}^2}}{2 d_i},  \\ 
&&\textrm{Re}(b_i)=\frac{d^2_i+\gamma_{-}}{2 d_i \gamma}, \qquad \textrm{Im}(b_i)=\frac{\sqrt{-d^4_i+2d^2_i\gamma_{+}-\gamma_{-}^2}}{2 d_i \gamma}, 
\end{eqnarray*}
with $\gamma_{\pm} = \gamma^2 \pm 1$, and the unitary matrices obtained,
\begin{equation}
\left( U_{a(b)} \right)_{i j}=V^{\dag}_{i n}a_n (b_n) V_{n j} .
\end{equation}

There is a restriction imposed by the fact that the imaginary parts of $a$ and $b$ have to be real numbers, which translates into the condition
\begin{equation}
|-1+d_i|\le\gamma\le 1 +d_i.
\end{equation}

\end{document}